\documentclass[aps,pra,reprint,amsmath,amssymb,nofootinbib]{revtex4-2}
\usepackage{graphicx,physics,mathtools,hyperref}
\usepackage{amsmath, amsthm}
\hypersetup{colorlinks=true,allcolors=blue}
\setlength{\parskip}{0pt}
\setlength{\parindent}{8pt}

\begin{document}

\title{Is Quantum Mechanics a proper subset of Classical Mechanics?}
\author{Khaled Mnaymneh}
\affiliation{Quantum and Nanotechnologies Research Centre, National Research Council Canada, 1200 Montreal Road, Ottawa, Canada K1A 0R6}
\date{\today}

\begin{abstract}
Quantum mechanics is widely regarded as a complete theory, yet we argue it is a tractable projection of a deeper, computationally-inaccessible classical variational structure. By analyzing the coupled partial differential equations of Hamilton's type-1 principal function, we show that classical action-based dynamics are generally undecidable, paralleling spectral gap undecidability in quantum systems. In near-Kolmogorov-Arnol'd-Moser systems, stability hinges on Diophantine conditions that are themselves undecidable, \textit{limiting predictability via arithmetic logic rather than randomness}. Phenomena like spin-3/2 systems and larger, quantum scars and Leggett-inequality violations support this view, naturally explained by time-symmetric classical action. This framework offers a principled resolution to the long-standing dichotomy between unitarity and entanglement by deriving both as emergent features of a tractable rendering from a fundamentally non-separable classical-variational geometry. Collapse and decoherence arise from representational limits, not ontological indeterminism. We propose an explicit experimental test using lateral double quantum dots to detect predicted deviations from standard quantum coherence at the classical chaos threshold. This reframing suggests the classical–quantum boundary is set by computability and not by Planck’s constant. Implications for quantum computing and quantum encryption are discussed.
\end{abstract}

\maketitle

\section{Introduction}
Quantum mechanics is widely regarded as a generalization of classical mechanics, with the latter emerging only in specific limits, such as \(\hbar \to 0\), decoherence, or coarse-grained measurement. Yet the formal derivation of quantum theory is deeply rooted in classical principles, particularly the variational structures underpinning the Hamilton-Jacobi equation (HJE). This paper motivates the view that quantum mechanics does not transcend classical theory, but rather emerges from it as a tractable restriction; one defined by computational, representational, and geometric constraints.

Throughout this work, we adopt an \textit{epistemic stance} unless otherwise stated. By this we mean that quantum mechanics is treated not as a fundamentally new ontological framework, but as an effective formalism that resolves the undecidability or intractability of classical variational equations by collapsing them into decidable, linear, and algebraically closed subspaces. Terms such as ``collapse,'' ``emergence,'' and ``randomness'' refer here not to physical indeterminism, but to epistemic artifacts arising from incomplete access to classical information.

At the heart of classical dynamics lies the principle of stationary action, also known as Hamilton's principle, expressed through Hamilton’s principal function \(S_H(q,t;q_0,t_0)\), which encodes the global structure of trajectories between initial and final boundary data. The formalism gives rise to a pair of coupled partial differential equations that are generically overdetermined and formally undecidable in the general case. Jacobi’s simplification of this structure, reducing it to a single partial differential equation (PDE) solvable by quadrature under integrable conditions, yields the tractable formulation from which the Schrödinger equation naturally emerges via linearization. We contend that quantum theory should be understood as this linear, computable embedding of the richer, but often incomputable, classical phase space.

Our view is motivated by a synthesis of insights from several domains. Schrödinger’s original wave equation was explicitly inspired by the time-independent Hamilton-Jacobi formalism \cite{Schrodinger1926}. Dirac and Feynman extended this connection through operator methods and path integrals \cite{Dirac1933,Feynman:100771}. Koopman and von Neumann recast classical mechanics within a Hilbert space structure \cite{Koopman1931, vonNeumann1932}, while recent work by Field \cite{Field2012} and Simeonov \cite{Simeonov2024} further demonstrates the emergence of the Schrödinger equation from eikonal and action-based perspectives. The Gottesman-Knill theorem \cite{Gottesman1998} reveals that large classes of quantum evolutions, particularly stabilizer circuits, remain classically simulable, pointing to a structural overlap between quantum and classically integrable dynamics.

These observations raise important questions for quantum information theory. If quantum mechanics corresponds to a decidable subspace of a classically undecidable structure, then quantum advantage may be an artifact of our current inability to represent the full complexity of classical chaotic dynamics. Stabilizer circuits, often seen as exceptions within quantum computing, may instead hint at a deeper principle: that quantum speedup only emerges at the computational boundary between classical integrability and undecidability.

In support of this view, we analyze the undecidability of Hamilton’s variational PDEs, link it to the Kolmogorov-Arnol'd-Moser (KAM) theory on number-theoretic conditions, and interpret quantum scars and Leggett inequality violations as empirical signatures of residual classical structure escaping complete quantum encoding. These examples suggest that quantum mechanics, while operationally effective, is representationally incomplete.

The remainder of the paper is structured as follows: Section II introduces the mathematical structure of classical generating functions and the distinction between Hamilton’s type-1 and Jacobi’s type-2 formulations. We demonstrate how solvability and integrability become obstructed in chaotic regimes, with implications for the global representability of classical dynamics. Section III derives the Schrödinger equation as a linearization of an already reduced Jacobi formulation. We show that this transition reflects a loss of global action structure and a restriction to decidable subsets of classical phase space. Section IV explains how time evolution emerges. Section V introduces undecidability and reinterprets core quantum phenomena, including wavefunction collapse, entanglement, and apparent irreducible uncertainty as consequences of representational breakdown rather than intrinsic randomness. Section VI discusses the implications for quantum computing, arguing that quantum advantage is likely a contingent phenomenon arising from our present lack of access to classical models of sufficient complexity. Section VII discusses empirical support. Section VIII outlines a survey of possible continuations beyond standard quantum theory. These include multivalued action functions, nonlinear wave equations, cohomological representations, sheaf-theoretic dynamics, arithmetic and \(p\)-adic quantizations, and deformation quantization, all of which may offer routes toward a more complete variational description, but may have significant consequences for quantum technologies, such as computing and encryption. Section IX offers a discussion on resolving the dichotomy between unitarity and entanglement via emergence and proposes an experimental test of this work. Section X concludes by proposing a research program in \textit{Post-Hamiltonian Representation Theory}, aimed at systematically exploring the logical and geometric limits of both classical and quantum mechanics.

\section{Hamilton's Principal Function}

Let \(q(t)\) be the complete solution, \(f(t;c)\), for some dynamical problem \cite{Calkin1996}, where the $c$'s are the set of $2n$ integration constants determined from initial positions and velocities. This reflects the standard form of classical determinism, where complete knowledge of a system's initial state determines its entire future evolution.

Here, instead, we specify only two positions; the ends of the path in configuration space. The action is then a function of the endpoints, and is known as Hamilton's principal function \cite{Goldstein,Lanczos} (HPF), $S_H(q,t;q_0,t_0)$. It is a type-1 generating function that reframes the problem from an initial-value formulation to a boundary-value formulation. It is not just a technical distinction but reflects a deeper organizational principle of classical dynamics. From this variational perspective, the actual trajectory of the system is not pushed forward by initial conditions, but selected from among possible histories that satisfy endpoint constraints. 

If we now reverse the process, that is, determine an independent way to find \(S_H\), then one only needs to solve two coupled first-order partial differential equations:
\begin{equation}
\frac{\partial S_H}{\partial t_0} = H_0 \quad \text{and} \quad \frac{\partial S_H}{\partial t} = -H,
\label{eq_2}
\end{equation} where $H$ and $H_0$ are the Hamiltonians at the end and start, at times $t$ and $t_0$, respectively. Hamilton introduced his principal function as a tool to reformulate classical mechanics by exploiting the principle of stationary action, encoding the full information of the system’s optimized dynamics between initial and final coordinates. 

The condition of needing to satisfy these two coupled partial differential equations simultaneously, however, imposed a strong duality condition: they demanded that both initial and final configurations be simultaneously specified by the generating function. While elegant in theory, this approach was believed to be analytically intractable for most systems. A proposed simplification \cite{Jacobi1835} essentially abandons one of the partial differential equations and solves for a type-2 generating function\footnote{There are four types of generating functions, however, the arguments found here can be mirrored for type-3 and type-4. Although there are four types, there are two kinds: transformations that cannot (i.e. type-1,4) and can (i.e. type-2,3) be used as identity canonical transformations. There are deep implications when it comes to interpreting what the components of the Taylor series of each kind of generating function means, which will be discussed later.}, \(S_J(q;\alpha;t)\), known as Jacobi's complete integral. The variable \(\alpha\) represents transformed momenta that are constants of motion. This single equation is what is now famously known as the Hamilton-Jacobi equation,
\begin{equation}
    H(q,\frac{\partial S_J(q;\alpha;t)}{\partial q},t)+\frac{\partial S_J(q;\alpha;t)}{\partial t}=0,
\label{eq_2}
\end{equation} where \(H\) is the proposed Hamiltonian. This formulation abandoned the symmetry between \(q\) and \(q_0\) and replaces the overdetermined dual-boundary-value structure of Hamilton’s approach and returned it back to an initial-value problem. The result was not only analytically solvable in many important cases, but it also reframed classical mechanics in terms of evolving a system forward from known initial conditions; an epistemic structure that would later be mirrored in quantum mechanics. 

Jacobi’s insight shifted the emphasis of classical mechanics from global geometric transformations back to a local solvable flow. His type-2 generating function solved the HJE, yielding trajectories via quadrature and was more than technical; it subtly encoded a move from a dual-variable, time-symmetric representation back to a causally directed, one-sided representation of dynamics. This reduction in representational completeness laid the groundwork for the emergence of quantum mechanics.

Jacobi identified several fundamental issues with Hamilton’s insistence on two simultaneous partial differential equations. First, he observed that this framework is mathematically overdetermined and not formally integrable risking internal inconsistency, as partial differential equations must be mutually compatible for a solution to exist. Second, Jacobi pointed out that the second equation is unnecessary; the dynamical problem is completely solved with one and only one of the partial differential equations. Most importantly, Jacobi advanced a tractable and elegant method based on the general theory of integration: by solving a single partial differential equation for $S_J(q;\alpha_i;t)$, with $n$ arbitrary constants $\alpha_i$.

Computationally tractable models necessarily rely on simplifications that prevent them from capturing essential structural aspects of the system. In contrast, intractable or formally nonintegrable formulations, while computationally prohibitive, align more closely with the system’s underlying dynamics. This disparity implies that either Nature harnesses physically inaccessible computational pathways, or that any complete theoretical account must accommodate fundamentally undecidable features \cite{MarkusMeyer1974}. The next two propositions sketch proofs for the integrability of the Hamilton and Jacobi formulation.

\subsection*{Proposition: Integrability in Hamilton’s Type-1 Generating Function Method} 

\noindent Let \(S_H(q, t;q_0,t_0)\) be a smooth type-1 generating function proposed to generate a symplectic transformation \((q_0,p_0)\mapsto(q,p)\), satisfying:

\begin{enumerate}
  \item \(p_i = \partial S_H / \partial q_i\),  \(p_{0i} = -\partial S_H / \partial q_{0i}\)
  \item \(\partial S_H / \partial t + H(q, \partial S_H / \partial q, t) = 0\)
  \item Symplectic preservation: \(dp_i \wedge dq_i = dp_{0i} \wedge dq_{0i}\)
\end{enumerate}

Then for generic, non-separable Hamiltonians \(H(q,p,t)\), the system of PDEs for \(S_H(q, t; q_0,t_0)\) is overdetermined and not formally integrable. \cite{Tabor1989}

\textit{Proof.} The Hamilton-Jacobi equation introduces a nonlinear constraint on \(\partial S_H/\partial q\), while the symplectic preservation condition imposes that the Hessian matrix \(\partial^2 S_H / \partial q_i \partial q_0^j\) must correspond to a closed, exact 2-form generating a symplectomorphism. For non-separable Hamiltonians, these conditions typically fail Cartan’s test for involutivity when prolonged to higher-order jets \cite{BryantGriffiths1995}. As an illustrative case, consider a driven anharmonic oscillator with Hamiltonian \(H = p^2/2 + q^4 + \epsilon q \cos t\). Solving for a global generating function \(S_H(q,t;q_0,t_0)\) with canonical endpoint constraints leads to singularities where characteristics intersect and where no globally smooth function \(S_H\) satisfies both the HJE and symplectic constraints. Hence, type-1 PDE systems are not generally solvable except in the case of separable or integrable dynamics. \qed

\subsection*{Proposition: Integrability in Jacobi’s Type-2 Generating Function Method}

\noindent Let \(S_J(q;\alpha;t)\) be a type-2 generating function solving the reduced Hamilton-Jacobi equation (equation \ref{eq_2}) with \(\alpha_i\) as a set of constants of motion. Then, under regularity conditions on \(H\), \(S_J(q;\alpha;t)\) exists locally and is solvable via the method of characteristics.

\textit{Proof.} Since this is a single first-order PDE in \(q\), with parameters \(\alpha\), the method of characteristics yields a system of ODEs equivalent to Hamilton's equations:
\begin{equation}
\dot{q}_i = \partial H / \partial p_i, \quad \dot{p}_i = -\partial H / \partial q_i, \
\label{eq_4}
\end{equation}
with \(p_i = \partial S_J / \partial q_i\). These can be solved with initial conditions determined by \(\alpha\). 

For example, consider the simple harmonic oscillator with \(H = p^2/2 + q^2/2\). The solution to the HJE is:

\begin{equation}
S_J(q;\alpha;t) = -\alpha t + \sqrt{2\alpha} q \sin t - \frac{1}{2} \sqrt{2\alpha}^2 \int_0^t \cos^2 \tau \, d\tau,
\label{eq_5}
\end{equation} where \(\alpha\) plays the role of the energy and indexes a family of classical trajectories. Thus, \(S_J(q;\alpha; t)\) is recovered via integration and encodes the characteristic solution. The equation is solvable for any smooth \(H\) that allows well-defined characteristics. \qed

These propositions then imply that the second equation in Hamilton’s formalism is not an inconsistency, but a signature of a deeper structure. It encodes a second Hamiltonian flow emerging from the dual time parameters inherent in the full principal function. Unlike Jacobi’s approach, which eliminates this second structure to ensure local solvability and analytic tractability, Hamilton’s formulation retains both the initial and final variables with each generating an independent Hamiltonian evolution. The resulting pair of partial differential equations thus corresponds to two distinct but interrelated Hamiltonians, governing evolution across different foliations of configuration space. While Jacobi’s method simplifies the system into a single effective flow, it does so by sacrificing the global bi-Hamiltonian geometry embedded in the original formulation. Properly understood, the apparent overdetermination in Hamilton’s equations reflects not a flaw but an expression of the richer symplectic structure that emerges when classical dynamics is viewed through the full lens of the principal function.

Jacobi's foundational concern that a function $S_H$ satisfying both partial differential equations might not exist at all can also be addressed, since it is mathematically nontrivial for a single function to satisfy two distinct first-order PDEs simultaneously. Hamilton did not publish such a proof to counter this objection, and his arguments remained largely heuristic. However, later work demonstrated \cite{Conway1940} that such a solution does exist. By carefully selecting boundary conditions and employing canonical transformations within a suitable configuration space, it was shown that a function $S_H$ can be constructed that satisfies both Hamiltonians concurrently. This result affirms that Hamilton’s original formulation, while more geometrically involved, is not mathematically inconsistent. It points to a deeper structural feature of classical mechanics, namely, that under certain conditions, the evolution of the system can be embedded within a higher-dimensional action landscape governed by multiple, compatible Hamiltonian flows. 

It is also an interesting fact, demonstrated by the same work \cite{Conway1940}, that we can achieve the canonical transformation generated by \(S_H\) in two steps by using \(S_J\); this is accomplished by the following transformation:
\begin{equation}
S_H(q,t;q_0,t_0) = S_J(q;\alpha,t)-S_J(q_0;\alpha;t_0).
\label{eq_6}
\end{equation} Figure~\ref{fig:ray} depicts equation \ref{eq_6} applied for different constant momenta in phase space to generate the full path. The various constant momenta can then be understood as a kind of "scaffolding" by the cotangent space of the actual principle function in configuration space.

\begin{figure}[t]
  \centering
  \includegraphics[width=\linewidth]{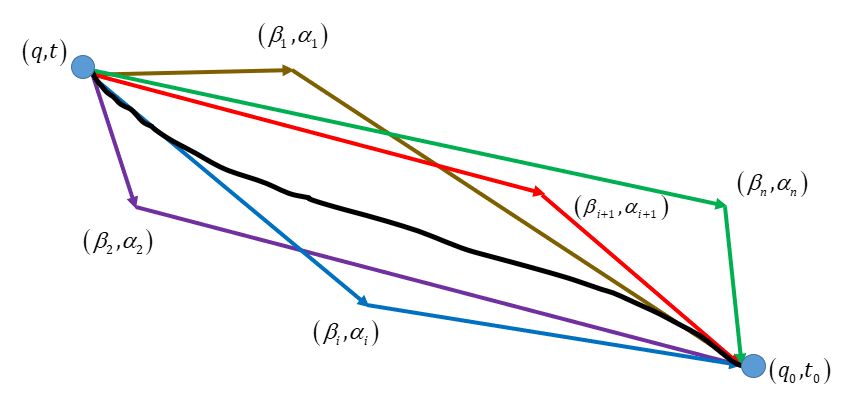}
  \caption{Ray depiction of various \(S_J\) generators with different constant momenta building \(S_H\), the HPF.}
  \label{fig:ray}
\end{figure}

Apart from all that was considered above, it is worth emphasizing that Jacobi’s formulation ultimately prevailed over Hamilton’s original variational approach in practical applications of classical mechanics. While Hamilton introduced the foundational principles, especially the notion of a variational action functional, Jacobi refined these into a more directly solvable framework based on partial differential equations. In particular, Jacobi's scaffolding equation\footnote{The moniker \emph{scaffold} was inspired by equation \ref{eq_6}}, (i.e. the HJE), emerged as the central tool for solving general dynamical problems, providing a single scalar equation whose solution is believed to encode the full dynamics of a system.

Jacobi’s formulation offered several practical advantages. It enabled the reduction of multi-dimensional problems to a single scalar PDE, facilitated separation of variables in many coordinate systems, and provided a natural bridge to canonical transformations and the theory of integrable systems. These features made it a practical and powerful engine of classical analysis.

More importantly, as theoretical physics evolved into the quantum era, the scaffolding equation became a conceptual and mathematical precursor to the Schrödinger equation. Once the appropriate formal analogies were established, particularly via the substitution of the classical action with a complex wave phase and the momentum with operators, the scaffolding equation provided the structural basis upon which the new quantum theory was constructed. In what follows, we will explore this deep correspondence, and show how the quantum wavefunction can be viewed as a simplification of the classical action function.

\section{Structural Descent into Quantum Mechanics}

The introduction of the quantum of action, \(\hbar\), by Planck was originally intended to resolve the mismatch between classical theory and the observed blackbody spectrum \cite{Planck1900}. As quantum theory evolved through the contributions of Schrödinger, Heisenberg, Born, and Dirac, it became clear that existing theoretical tools lacked a unified foundation to connect these distinct frameworks. In this context, the HJE, once central to analytical mechanics and the question of integration, was reexamined and adapted. The recognition that \(\hbar\) naturally aligned with the classical principle of stationary action, particularly through the quantization of action variables, gave new relevance to the Hamilton-Jacobi formalism. Describing \(\hbar\) as the "quantum of action", Sommerfled called it "a most fortunate" name \cite{Sommerfeld1912}, however, this work implies that it may not be as "fortunate" as one thought. In this way, quantum mechanics did not discard classical principles but instead imposed structural constraints, embodied in \(\hbar\), on the classical action function itself. In fact, already before quantum mechanics, stationary action implied that the dynamics of phase space must have a non-zero phase area, where,
\begin{equation}
    \forall \; \boldsymbol{\xi} \neq 0 \; \exists \boldsymbol{\eta} : \omega^2(\boldsymbol\xi,\boldsymbol\eta)\neq0,
\end{equation} is called a symplectic structure on an even-dimensional differentiable manifold \(M^{2n}\) and  \(\omega^2\) is a closed nondegenerate 2-form, \(d\omega^2=0\). The pair \((M^{2n},\omega^2)\) is known as a symplectic manifold (i.e. phase space). 

The integration of the Hamilton-Jacobi equation offers a powerful and, in principle, complete method for solving classical dynamical systems by reducing the problem to that of determining a single scalar function. One way to solve it was inspired by De Broglie's revolutionary hypothesis that particles possess intrinsic wave-like properties. This gave Schrödinger the insight to reformulate the problem of motion in terms of an eigenvalue value question, particularly, 
\begin{equation}
\oint \frac{1}{\lambda}\, dq = n,
\end{equation} where \(\lambda\) is the particle's wavelength, \(q\) is the position coordinate in phase sapce and \(n\) is the integer number of Planck constants needed to revcover the phase space area value.

Rather than seeking a particle’s path through configuration space, he introduced a wave equation whose solutions encode the evolution of complex amplitudes. In doing so, Schrödinger effectively transformed the Hamilton-Jacobi equation into a linear partial differential equation by interpreting the classical action as a phase factor of interfering waves in configuration space modulated by Planck’s constant (see fig.~\ref{fig:interference}). This was enabled by Schrödinger's ansatz of Jacobi's complete integral, \(S_J\).
\begin{equation}
S_J(q;\alpha;t)=K\log\psi(q,t),
\label{eq_7}
\end{equation} where \(\psi\) is the newly introduced wavefunction and \(K\) is the complex proportionality constant. Equation~\ref{eq_7} links the \emph{scaffolding nature} of \(S_J\) and \(\alpha\) to the quantum mechanical wavefunction, \(\psi\), itself.

Thus, quantum mechanics does not emerge by generalizing classical theory but linearizing the scaffolding of an already reduced classical structure. The distinction, between the applications of Hamilton and Jacobi, becomes critical in analyzing the relationship between classical and quantum mechanics. The Schrödinger equation returns to a local time evolution of a wavefunction and discards the global variational structure encoded in \(S_H\). Figure~\ref{fig:jacobi_waveflow} depicts this process in flowchart form, where we end at how the theory is used today. 

\begin{figure}[t]
  \centering
  \includegraphics[width=\linewidth]{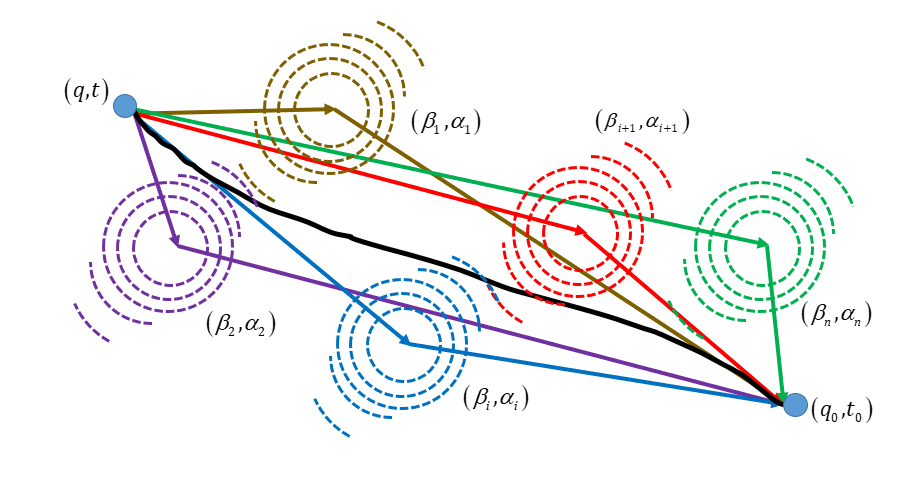}
  \caption{Similar to figure~\ref{fig:ray} but constant momenta now "radiates" out ripples in configuration space than interfere in the shape of the HPF.}
  \label{fig:interference}
\end{figure}

 The foundational structure of quantum mechanics is encoded in Hilbert space, a complete, linear inner-product space that defines how states evolve and interact \cite{vonNeumann1932}. While powerful, this formalism enforces strong global constraints: wavefunctions must be single-valued, square-integrable, and smoothly evolving under unitary dynamics. In contrast, the Hamilton-Jacobi formulation of classical mechanics produces action functions that become multivalued, singular, and non-differentiable in chaotic regimes, where \(S_J\) becomes multivalued or singular due to the formation of caustics. No globally valid solution of the HJE exists \cite{Tabor1989}. Since semiclassical and WKB methods derive quantum dynamics from \(S_J\), quantum mechanics inherits only a regularized approximation of the classical structure, and fails to represent it fully in regimes where the Hamilton-Jacobi framework itself breaks down. The quantum description, while operationally successful in many regimes, cannot contain the full structural richness of classical variational mechanics.

\begin{figure}[t]
  \centering
  \includegraphics[width=\linewidth]{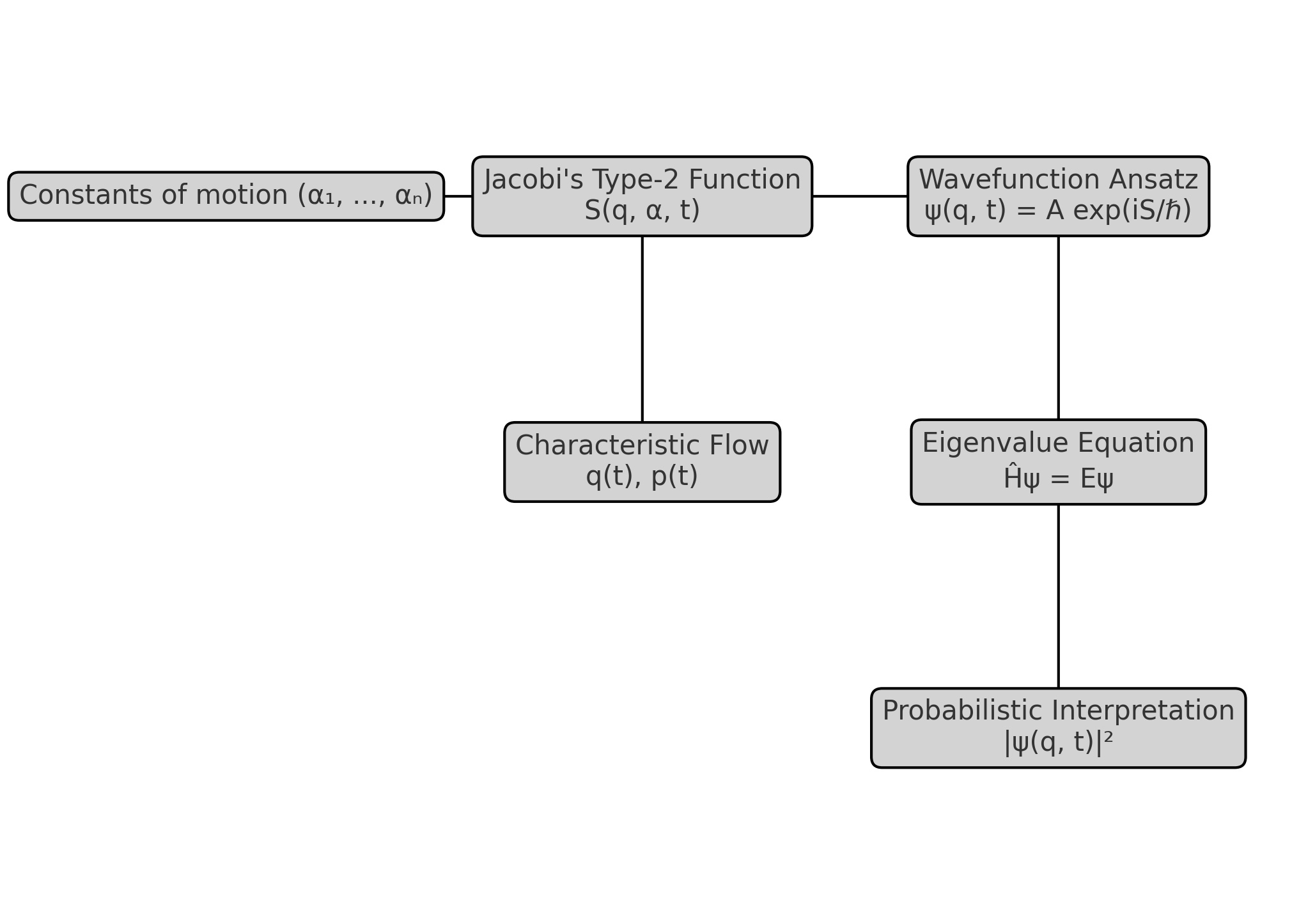}
  \caption{Flowchart illustrating the transition from constants of motion \( \alpha \) in Jacobi's type-2 generating function to the wavefunction \( \psi(q, t) \) and its probabilistic interpretation via eigenvalue decomposition.}
  \label{fig:jacobi_waveflow}
\end{figure}

\section{The Emergence of Time Evolution}

Let \(S_H(q,t;q_0,t_0)\) denote Hamilton’s principal function, which gives the classical action accumulated along a trajectory from initial configuration \(q_0\) at \(t_0\) to the final configuration \(q\) at time \(t\). Suppose this action can be decomposed into two segments, each described by Jacobi's complete integral (i.e., a type-2 generating function):
\begin{equation}
S_H(q,t;q_0,t_0) = S_{J}(q;\alpha;t) - S_{J}(q_0;\alpha;t_0),
\label{eq_9}
\end{equation} where \(\alpha\) represents a shared constant of motion (such as the momentum label), and the composition is made stationary with respect to \(\alpha\) via:
\begin{equation}
\frac{\partial}{\partial \alpha} \left( S_{J}(q;\alpha;t) - S_{J}(q_0;\alpha;t_0) \right) = 0.
\end{equation}

This construction is geometrically analogous to composing two rays in configuration, stitched at an intermediate caustic or surface labeled by \( \alpha \). The wavefronts associated with each ray segment correspond, in the semiclassical regime, to local WKB wavefunctions:
\begin{align}
\psi_{\text{init}}(q_0) &\sim e^{\frac{i}{\hbar} S_{J}(q_0;\alpha;t_0)}, \\
\psi_{\text{final}}(q) &\sim e^{\frac{i}{\hbar} S_{J}(q;\alpha;t)}.
\end{align}
Taking the natural logarithm of both sides of equation \ref{eq_9}, we obtain:
\begin{equation}
e^{\frac{i}{\hbar} S_H(q,t;q_0,t_0)} = \frac{e^{\frac{i}{\hbar} S_{J}(q;\alpha;t)}}{e^{\frac{i}{\hbar} S_{J}(q_0,\alpha,t_0)}} = \frac{\psi_{\text{final}}(q)}{\psi_{\text{init}}(q_0)}.
\end{equation}

Rearranging, we identify this composition as defining the semiclassical time evolution of the wavefunction:
\begin{equation}
\psi(q, t) = e^{\frac{i}{\hbar} S_H(q,t;q_0,t_0)}\psi(q_0, 0).
\end{equation}

Thus, the time evolution operator \(\hat{U}(t)\) in the semiclassical limit is generated by the HPF:
\begin{equation}
\hat{U}(t) = e^{\frac{i}{\hbar} S_H(q,t;q_0,t_0)},
\label{eq_8}
\end{equation} expressed in terms of a type-1 generating function that itself arises from the variationally matched composition of two type-2 complete integrals. 

Taylor expansions of equation \ref{eq_8}, used in gauge theory \cite{Peskin1995} lead directly to non-local influences like the Aharonov-Bohm effect \cite{AharonovBohm1959} and, subsequently, Berry's phase \cite{Berry1984}. It is easy to see then how the global nature of type-1 generating functions can lead to these non-local effects. This perspective further implies that quantum mechanical time evolution is not an imposed structure but rather emerges from the gluing of classical rays across phase space via phase continuity. The wavefronts of constant phase (i.e., level sets of \(S_J\)) evolve via ray congruences that dictate both the quantum phase and the classical trajectory, thereby unifying the geometric and operator-based pictures of evolution.

Although constructing Hamilton's principal function from Jacobi's complete integral offers a structured pathway for solving Hamiltonian systems, it encounters intrinsic limits. The resulting HPF can generate singular canonical transformations in some case \cite{Calkin1996}. This singularity\footnote{Interestingly, the appearance of singularities in the classical construction of the Hamilton's principal function mirrors several phenomena typically associated with quantum mechanics. In particular, the failure of a globally unique mapping in classical phase space corresponds to the emergence of phase singularities, caustic structures, and interference patterns in semiclassical quantum mechanics. Points where the classical action becomes multivalued or singular are precisely where quantum wavefunctions exhibit features such as constructive or destructive interference, phase wrapping, or abrupt changes in probability density. This correspondence suggests that certain "quantum" effects may not be purely quantum at all, but rather artifacts of regularizing classical variational failure modes into a stationary-action, probabilistic formalism. Thus, the structural singularities arising in the time-independent HPF construction provide further evidence that quantum mechanics emerges as a rigid projection of richer classical dynamics.} reflects the failure of a globally unique mapping between configuration space and momentum space, particularly at caustics or regions of multivalued action. Such breakdowns expose the representational incompleteness of the type-2 framework and underscore the need for more comprehensive formulations, such as a more wholistic Hamilton’s type-1 approach, to fully capture classical dynamics in chaotic and global dynamical regimes \cite{ArnoldMechanics}.

\section{Undecidability and other Quantum Puzzles}

The common dismissal of Hamilton’s original two-PDE formulation may warrant reconsideration. What Jacobi regarded as overdetermined may, in fact, reflect a necessary structural redundancy required to capture the global, bi-directional nature of dynamical evolution in complex systems. The risk of internal inconsistency, if the HPF is sought, as Jacobi claimed, is not inconsistency but undecidability. 

\newtheorem{theorem}{Theorem}
\begin{theorem}
There exists no algorithm that, given an arbitrary smooth Hamiltonian $H(q, p, t)$, determines whether the corresponding system of partial differential equations for Hamilton's principal function $S_H(q, t; q_0, t_0)$ has a global classical solution. That is, the general solvability of the Hamilton principal function is undecidable.
\end{theorem}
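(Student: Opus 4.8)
The plan is to prove undecidability the way undecidability is always proved: by an effective reduction from a problem already known to be undecidable, rather than by any direct analysis of the PDE system itself. The natural source problem is the halting problem for a universal Turing machine $T$, which I would connect to the dynamics through the theory of Turing-complete Hamiltonian flows. First I would fix $T$ and recall that there is a computable encoding of its configurations into the phase space of a smooth Hamiltonian system on a symplectic manifold $(M^{2n},\omega^2)$, such that the induced flow simulates the discrete evolution of $T$ step by step. This embedding must be \emph{effective}: the map from an input word $w$ to the Hamiltonian $H_w(q,p,t)$ has to be computable and must preserve smoothness, so that each $H_w$ is an admissible instance of the decision problem posed in the statement.

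The second step is to translate the halting behaviour of $T$ on $w$ into a global obstruction for Hamilton's principal function. Here I would lean on the first Proposition above, which asserts that the type-1 system for $S_H$ is globally solvable only in the separable or integrable case, together with its accompanying observation that non-integrable dynamics generate caustics across which no globally smooth $S_H$ can satisfy both the Hamilton-Jacobi constraint and the symplectic-preservation condition. The aim is to arrange the embedding so that the computation of $T$ on $w$ halts \emph{if and only if} the induced flow stays confined to an integrable region admitting a single-valued global action, while a non-halting computation forces trajectories into a regime where characteristics cross and the action becomes multivalued. Under such a correspondence, deciding global solvability of $S_H$ for $H_w$ is the same as deciding whether $T$ halts on $w$.

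The final step is the contradiction itself: if an algorithm existed that, given an arbitrary smooth $H$, decided whether the $S_H$ system admits a global classical solution, then composing it with the computable map $w \mapsto H_w$ would decide the halting problem, which is impossible; hence no such algorithm exists. I would also record an alternative reduction routed through KAM theory, since the paper already invokes it: persistence of the invariant tori that underwrite a global action function hinges on Diophantine conditions on the frequency vector, and by the MRDP theorem the solvability of Diophantine equations is undecidable, so one would encode a target Diophantine equation into the frequencies of a near-integrable $H$ in such a way that global solvability of $S_H$ tracks the (non)existence of integer solutions.

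The hard part, in either route, is the exact correspondence demanded in the middle step. It is not enough that halting (or Diophantine solvability) merely correlates with integrability; the reduction requires a genuine equivalence, and that in turn requires tight control over the global geometry of the action as a functional of the embedded computation — ruling out accidental caustics in the halting case and guaranteeing genuine multivaluedness in the non-halting case — all while keeping $H_w$ smooth and the encoding computable. Reconciling the continuous, analytic requirement of global smoothness of $S_H$ with the discrete, combinatorial nature of the halting predicate is where the real work lies, and where the sharpest form of the argument (for instance, upgrading the KAM Diophantine conditions into an exact arithmetic undecidability statement) would have to be carried through.
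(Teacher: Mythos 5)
Your proposal follows essentially the same route as the paper's proof and its Appendix A: an effective reduction from the halting problem via a Moore-style smooth Hamiltonian $H_M$ that simulates a Turing machine, with halting corresponding to a globally regular $S_H$ and non-halting to caustics where $\nabla_q S_H$ fails to remain continuous, so that a decision procedure for global solvability would decide halting. You even match the paper in candidly flagging the unproved core --- the exact biconditional between halting and global $C^2$ solvability, which the paper likewise asserts rather than constructs --- and your KAM/Diophantine aside parallels the paper's separate second theorem.
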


\begin{proof}[Proof:]
Hamilton's principal function $S_H(q, t; q_0, t_0)$ satisfies the coupled first-order partial differential equations:
\begin{equation}
\frac{\partial S_H}{\partial t_0} = -H(q_0, \frac{\partial S_H}{\partial q_0}, t_0), \quad \frac{\partial S_H}{\partial t} = H(q, \frac{\partial S_H}{\partial q}, t).
\end{equation}
These equations determine the global variational structure of classical mechanics between fixed boundary points $(q_0, t_0)$ and $(q, t)$.

\vspace{1em}
\noindent \textbf{Step 1: Undecidability of PDE Systems.} It is a known result \cite{richardson1968} that the general problem of determining whether an analytic system of partial differential equations has a solution is undecidable. Further, it was demonstrated \cite{pourel1989} that even for linear wave equations, solutions may be noncomputable despite computable initial data. These results establish that the existence of solutions for nonlinear PDEs is, in general, an undecidable problem.

\vspace{1em}
\noindent \textbf{Step 2: Encoding a Turing Machine in a Hamiltonian.} Using methods from computational dynamics and Hamiltonian simulation \cite{moore1990}, one can construct a smooth Hamiltonian $H_M(q, p, t)$ whose dynamics simulate the behavior of a Turing machine $M$ on input $x$. The structure of $H_M$ is chosen such that the existence of a global solution $S_H$ for the associated PDEs is equivalent to the halting of $M$.

\vspace{1em}
\noindent \textbf{Step 3: Reduction to the Halting Problem.} Define the decision problem $\phi_M$: ``Does Turing machine $M$ halt on input $x$?" For the constructed $H_M$, we assert:
\[
\phi_M \text{ is true} \iff S_H \text{ exists globally}.
\]
Since the Halting Problem is undecidable, it follows that there can be no algorithm that decides the existence of global solutions to the system defined by the Hamilton principal function for arbitrary $H$.
\end{proof}

Refer to Appendix A for a formal reduction of this proof-sketch above. This result is related to a recent result \cite{Cubitt2015} about the undecidability of the spectral gap: even with complete knowledge of some Hamiltonian, one cannot, in general, determine whether the system possesses a spectral gap. This undecidability is inherited by quantum mechanics from classical mechanics challenging the notion that quantum theory yields fully predictive, algorithmically representable dynamics. Such results underscores a deeper structural concern: that the wavefunction-based approach of quantum mechanics may be intrinsically incomplete in its capacity to capture global dynamical properties, particularly in systems of high complexity. Incompleteness retains consistency but adds fundamental undecidability in the axiomatic system, specifically here, the axioms of quantum mechanics.

Furthermore, the following theorem expresses the role of the Kolmogorov–Arnold–Moser (KAM) \cite{Kolmogorov1954, Moser1962, Arnold1963} theory and number theory showing classical predictability depends upon undecidable arithmetic properties, thereby supporting to claim that quantum mechanics emerges from a decidable subset of classical dynamics. 

\begin{theorem}
In nearly integrable Hamiltonian systems, the persistence of stable, quasi-periodic trajectories under perturbation depends on number-theoretic conditions that are undecidable in general. Thus, the long-term predictability of classical trajectories is formally obstructed by the undecidability of arithmetic properties of frequency vectors.
\end{theorem}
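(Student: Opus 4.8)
The plan is to turn the informal statement into a precise undecidability result about the standard KAM non-resonance criterion and then reduce the Halting Problem to it. First I would fix the setting: write the nearly integrable Hamiltonian as $H = H_0(I) + \epsilon H_1(I,\theta)$ in action--angle variables, with unperturbed frequency map $\omega(I) = \nabla_I H_0(I)$, and recall the standard KAM conclusion that a torus with frequency $\omega$ persists for sufficiently small $\epsilon$ provided $\omega$ is Diophantine, i.e. there exist $\gamma>0$ and $\tau \geq n-1$ with $|\langle k,\omega\rangle| \geq \gamma |k|^{-\tau}$ for all nonzero $k \in \mathbb{Z}^n$ in $n$ degrees of freedom. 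The decision problem I attach to the theorem is then: given a computably presented, analytic, twist-nondegenerate $H_0$ and an action value $I^*$, decide whether $\omega(I^*)$ is Diophantine. Since persistence of the $I^*$-torus hinges on exactly this arithmetic test, undecidability of the test is the content of the theorem.

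Second, I would reduce the Diophantine condition for a frequency vector to a Diophantine-approximation property of a single real number. In two degrees of freedom the condition $|k_1\omega_1 + k_2\omega_2| \geq \gamma|k|^{-\tau}$ is, after dividing by $\omega_2$, equivalent to a lower bound on $|q\,\alpha - p|$ with $\alpha = \omega_1/\omega_2$; thus $\omega$ is Diophantine iff $\alpha$ is a Diophantine number (equivalently, not Liouville). I would then express the arithmetic of $\alpha$ through its continued fraction $[a_0;a_1,a_2,\dots]$ with convergent denominators $q_m$: $\alpha$ is Liouville precisely when $\limsup_m \log a_{m+1}/\log q_m = \infty$, and Diophantine when the partial quotients grow at most polynomially in $q_m$.

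Third is the reduction itself, which I expect to be the crux. Given a Turing machine $M$, I would build a computable $\alpha_M$ by defining its partial quotients from the simulation of $M$: set $a_m = 1$ as long as $M$ has not halted within $m$ steps, and once $M$ halts switch to super-polynomially growing partial quotients (e.g. $a_{m+1} = 2^{q_m}$). If $M$ never halts then $\alpha_M = [0;1,1,1,\dots]$ is badly approximable, hence Diophantine; if $M$ halts then $\alpha_M$ acquires infinitely many enormous partial quotients and is Liouville. Each $a_m$ is obtained by a finite simulation, so $\alpha_M$ is a uniformly computable real, and I would realize it dynamically by choosing an analytic, twist-nondegenerate $H_0$ (a quadratic form plus a linear term with computable coefficients) whose frequency ratio at a chosen $I^*$ equals $\alpha_M$. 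The map $M \mapsto (H_0, I^*)$ is then computable, and $\omega(I^*)$ is Diophantine iff $M$ does not halt; any algorithm deciding the persistence test would therefore decide halting, a contradiction.

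The main obstacle I anticipate is not the encoding but closing the biconditional on the dynamical side: KAM gives persistence as a \emph{sufficient} consequence of the Diophantine condition, whereas the destruction of Liouville tori, the converse direction needed to make non-persistence genuinely detectable, is the subtler subject of converse-KAM theory. I would handle this either by restricting to perturbation classes for which Liouville tori are provably destroyed, or, more cleanly, by phrasing the theorem as the undecidability of the KAM certification criterion itself, the Diophantine test that any constructive stability proof must pass, so that the arithmetic undecidability obstructs algorithmic verification of long-term stability regardless of the converse-KAM subtleties.
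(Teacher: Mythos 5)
Your proposal is correct and follows the same skeleton as the paper's proof---invoke the KAM theorem to tie torus persistence to the Diophantine property of the frequency vector, then appeal to the undecidability of that arithmetic property---but you go substantially further on the two steps that actually carry the logical weight. Where the paper simply asserts, with a generic citation to Davis's computability text, that deciding whether a given vector is Diophantine is undecidable because it ``requires resolving questions about continued fraction expansions,'' you construct the reduction explicitly: the real $\alpha_M = [0;a_1,a_2,\dots]$ with $a_m = 1$ until the machine halts and $a_{m+1} = 2^{q_m}$ thereafter is uniformly computable, badly approximable (hence Diophantine) iff $M$ never halts, and Liouville iff it halts; embedding it as the frequency ratio of a computable, twist-nondegenerate $H_0$ at a chosen $I^*$ then yields a genuine many-one reduction from the Halting Problem, which is precisely the content the paper's citation only gestures at. Equally valuable, you identify the converse-KAM gap that the paper's proof silently steps over: KAM gives only Diophantine $\Rightarrow$ persistence, so the paper's conclusion that ``deciding whether a given orbit lies on a stable torus is formally undecidable'' tacitly assumes the converse (Liouville $\Rightarrow$ destruction), which fails for arbitrary perturbations (trivially at $\epsilon = 0$, and more generally absent converse-KAM hypotheses). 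Your two repairs---restricting to perturbation classes where Liouville tori are provably destroyed, or recasting the result as undecidability of the KAM certification criterion that any constructive stability argument must pass---are exactly the right moves, and the second in fact matches the theorem's wording about predictability being ``formally obstructed'' better than the paper's own phrasing does. In short: same route, but your version supplies the missing reduction and honestly quarantines a logical gap the published proof leaves open.
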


\begin{proof}
The KAM theorem  states that for a nearly integrable Hamiltonian system, invariant tori survive under small perturbations if their frequency vectors \( \omega \in \mathbb{R}^n \) satisfy a Diophantine condition:
\[
|\omega \cdot k| \geq \frac{\gamma}{|k|^\tau}, \quad \text{for all } k \in \mathbb{Z}^n \setminus \{0\}, \text{ with } \gamma > 0, \ \tau > n-1.
\]
This condition ensures that the frequencies are sufficiently irrational and avoid small denominators that would otherwise destroy the tori.

However, it is known from number theory and computability theory that it is undecidable, in general, whether a given irrational vector \( \omega \) satisfies a Diophantine condition of the above form \cite{davis1982computability}. This undecidability stems from the fact that distinguishing Diophantine irrationals from Louisville numbers, which violate KAM conditions, requires resolving questions about continued fraction expansions that are algorithmically incomputable.

Therefore, although the existence of a quasi-periodic trajectory may be guaranteed in specific cases, the general problem of deciding whether a given orbit lies on a stable torus is formally undecidable. This implies that the long-term stability and predictability of classical systems is not just practically limited, but fundamentally obstructed by number-theoretic undecidability.
\end{proof}

This reinterpretation also offers novel insights into several foundational puzzles in quantum theory, by framing them as artifacts of structural incompleteness that arise when the full classical variational geometry is reduced to a solvable and representationally asymmetric form.

In this view, quantum collapse \cite{vonNeumann1955} need not be seen as a stochastic physical process occuring outside of quantum mechanics, but rather as an emergent consequence of representational breakdown; where the classical action function cannot be globally defined due to logical obstruction (related to the section on spin, later in this section). Quantum theory then appears not as a complete description, but as a computationally constrained subset of a deeper classical structure, with Hamilton’s full formulation offering a potential reintroduction of lost dynamical information.

The notion of \emph{wavefunction collapse} can be understood not as a mysterious physical process but as a byproduct of the epistemic incompleteness of the wavefunction representation. In Jacobi’s type-2 formulation, final configurations are not directly encoded; the wavefunction only contains initial data and constants of motion. Thus, what appears as collapse is simply a discontinuous update to a structure that never had complete boundary data to begin with. This resonates with views from epistemic interpretations of quantum mechanics and with ideas in consistent histories and relational quantum mechanics \cite{Holland1993,Goldstein}.

The phenomenon of \emph{entanglement and nonlocality} arises naturally when we abandon the global representational framework afforded by Hamilton's type-1 structure. In the wavefunction formalism derived from Jacobi's method, there is no direct encoding of full phase space correlations. Instead, correlations between subsystems must be imposed algebraically across tensor products of reduced representations. What appears as nonlocality is a direct consequence of the inability to specify global classical configurations in a system whose description is inherently local and initial-value-based. This complements the notion in algebraic quantum field theory that entanglement is a reflection of global constraints not captured by local observables \cite{Mnaymneh2024} .

Similarly, \emph{quantum randomness} emerges from the fact that the wavefunction only projects a distribution over an ensemble of classical paths labeled by constants of motion. The probabilistic interpretation via \(|\psi|^2\) thus reflects a structural uncertainty, not a fundamental stochasticity. It echoes Born’s ensemble interpretation and more recent decoherence-based derivations of apparent randomness \cite{Holland1993,Koopman1931}.

Even \emph{Planck’s constant} \(\hbar\) can be reinterpreted in this light. Rather than being a fundamental constant of nature, it plays the role of a minimal action scale at which the linearization of classical mechanics become structurally stable. That is, \(\hbar\) sets the lower bound on resolvable areas in phase space consistent with wavefunction-based evolution. This idea finds support in semiclassical mechanics, WKB theory, and geometric quantization, where \(\hbar\) enters as a scaling factor for phase space volume elements and quantization conditions \cite{Goldstein,Lanczos}.

The emergence of intrinsic spin; long treated as a uniquely quantum property, can also be reframed through the lens of structural incompleteness. In classical mechanics, angular momentum is an external, orbital quantity tied to spatial coordinates. However, the regularization of phase-space action into a wavefunction introduces double-valued representations of the rotation group, reflecting the breakdown of a purely classical description of rotational states. 

Spin, in this interpretation, arises not as a new physical entity, but as a structural necessity: a way to encode rotational uncertainty inherent in the loss of classical variational completeness. This view aligns with the geometric interpretation of spinors as sections over nontrivial bundles and suggests that spin, like collapse and entanglement, is a consequence of projecting incomplete classical information into a probabilistic, wave-based formalism.

Spin operators \(S_x, S_y, S_z\) are defined as generators of internal rotation transformations and satisfy the commutation relations:
\begin{equation}
[S_i, S_j] = i \hbar \epsilon_{ijk} S_k,
\end{equation}
where \(\epsilon_{ijk}\) is the Levi-Civita symbol. Regardless of the spin value \(s\), the spin matrices satisfy the Casimir identity:
\begin{equation}
S_x^2 + S_y^2 + S_z^2 = \hbar^2 s(s+1) I,
\label{eq_18}
\end{equation}
where \( I \) is the identity matrix.

For spin-\(1/2\) systems, the spin matrices are simple multiples of the Pauli matrices and their squares trivially commute. Spin-1 operators are not scalar multiples of the identity, however, they still commute with each other allowing the existence of a complete set of common eigenvectors. In these cases, the Casimir identity corresponds directly to relationships among eigenvalues.

In the case of spin-3/2 systems, the component spin-squared operators $S_x^2$, $S_y^2$, and $S_z^2$ each have eigenvalues $9/{4}\hbar^2$ or $1/{4}\hbar^2$. However, these operators do not commute, meaning they cannot be simultaneously diagonalized. As a result, while eq.\ref{eq_18} holds as an operator equation, with all physical states being eigenstates of $S^2$ having eigenvalue $15/{4}\hbar^2$, it does \emph{not} follow that this value is the sum of the eigenvalues of $S_x^2$, $S_y^2$, and $S_z^2$ taken individually. That is, there exists no basis in which all three component squares have well-defined eigenvalues whose sum equals $15/{4}\hbar^2$. Note that this is not a contradiction but a limitation of attempting to assign stochastic-style component decompositions in noncommuting operator contexts \cite{KochenSpecker1967}. 

This highlights a central structural feature of quantum mechanics: even in simple systems, global rotational information encoded in $S^2$ cannot be decomposed into local eigenvalue assignments for individual axes \cite{Ballentine}. In the framework developed here, this limitation is interpreted not as a sign of fundamental indeterminacy, but as a structural consequence of projecting from a multivalued classical action geometry into a decidable, separable formalism. This behavior can be understood as the inevitable consequence of projecting a fixed stationary path into a stationary, linear representation. Spin represents the loss of global rotational determinacy in classical phase space. The necessity of using spinors (representations of SU(2), the double cover of SO(3)) in quantum mechanics reflects the breakdown of classical rotation groups into double-valued, rigid structures.

Taken together, these puzzles, wavefunction collapse, entanglement, randomness, the meaning of \(\hbar\), and the "nonclassicality" of spin, find a common resolution in the representational incompleteness imposed by reducing classical mechanics to a solvable, eigenvalue-structured framework. Rather than viewing quantum mechanics as introducing fundamentally new physics, this perspective shows that its oddities may all stem from constraints and symmetries broken in the process of simplifying classical phase space geometry into a wave representation.

\section{Implications for Quantum Advantage}

In light of this reinterpretation, claims of quantum advantage must be reevaluated. Rather than accessing fundamentally new modes of computation, quantum algorithms may be exploiting regions of classical theory where representational collapse prevents analytic integration. That is, quantum mechanics performs efficiently not because it transcends classical dynamics, but because it approximates and regularizes regions of classical intractability.

The Gottesman-Knill theorem \cite{Gottesman1998} shows that quantum circuits composed entirely of Clifford gates (e.g., Hadamard, CNOT, phase) acting on stabilizer states can be simulated efficiently on a classical computer. These operations correspond to a highly symmetric and algebraically constrained subset of quantum mechanics that maps structurally onto classically tractable transformations, similar to those allowed by Jacobi-type solvability (see figure~\ref{fig:venn}).

True quantum advantage appears to manifest when we move beyond the symmetry-protected regime associated with stabilizer circuits, specifically, with the introduction of non-Clifford elements such as T gates and magic state distillation, which disrupt the linear structure preserved by Clifford operations. Within the standard quantum formalism, these non-Clifford processes break stabilizer symmetries and lead to transformations that cannot be efficiently emulated by known classical means. However, from the perspective developed in this work, where quantum mechanics arises as a linearized, tractable subset of classical variational structure, this boundary may not be absolute. 

The classical simulability demonstrated by the Gottesman-Knill theorem may, in principle, be extended beyond stabilizer circuits. With the development of appropriate non-quantum regularizations, such as frameworks capable of embracing chaotic, multivalued, or sheaf-based structures, it is conceivable that even non-Clifford operations could ultimately admit classical simulation strategies. In this view, quantum computational advantage would not represent a fundamental transcendence over classical mechanics but rather reflect the current absence of complete classical methods for handling nonintegrable, structurally complex transformations. 

Thus, the boundary between classical simulability and quantum speedup is not sharp but structurally defined: it coincides with the breakdown of global classical action formulations and the emergence of probabilistic wave-based representations. Rather than superseding classical mechanics, quantum computation may operate within its unexploited structural degrees of freedom, leaving open the possibility that future classical frameworks, augmented by advances in mathematical modeling or machine learning, could erode or reframe the apparent quantum advantage.

\begin{figure}[t]
  \includegraphics[width=\linewidth]{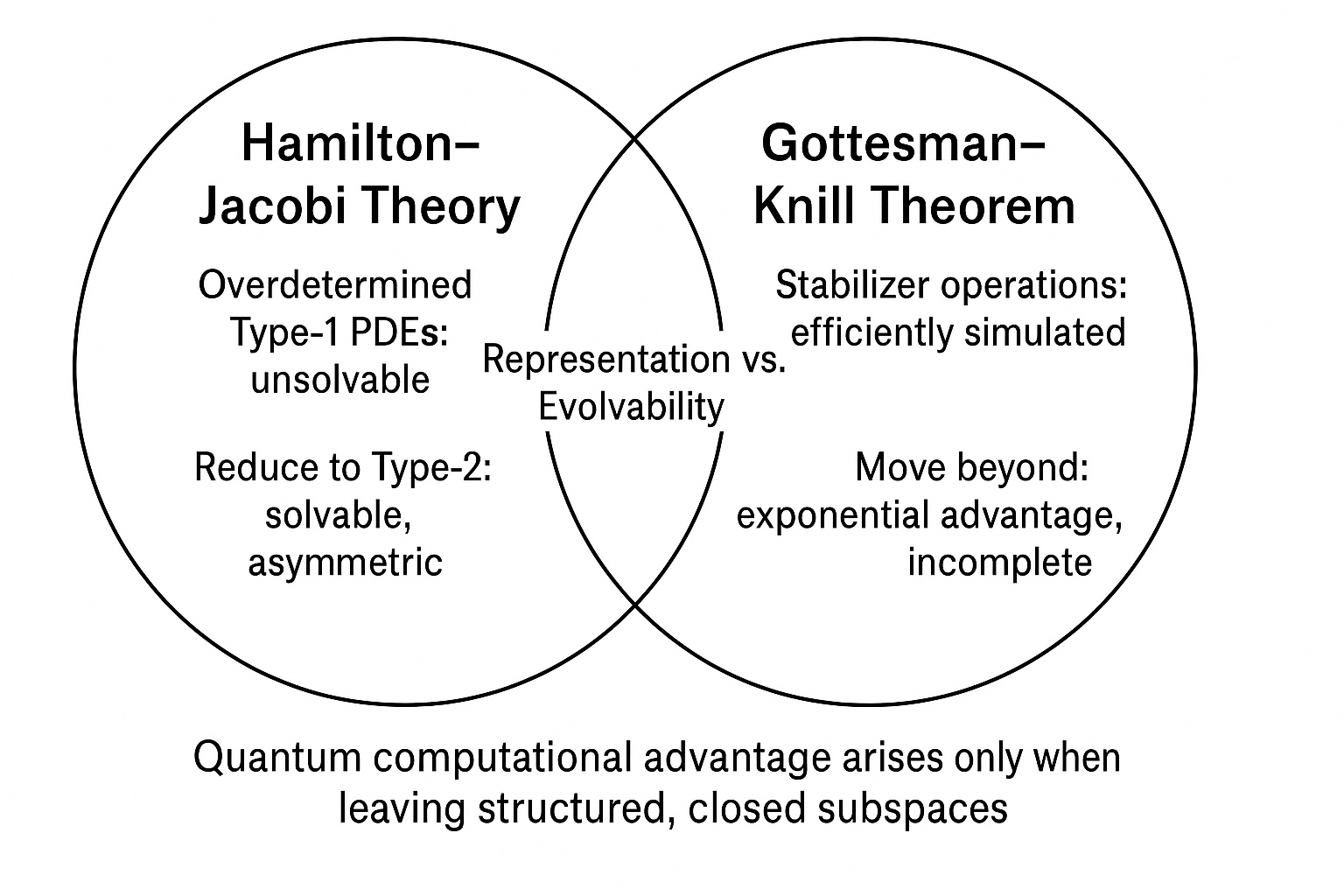}
  \caption{Conceptual overlap between classical type-2 mechanics, Gottesman-Knill quantum operations, and the limitations of representational frameworks.}
  \label{fig:venn}
\end{figure}

This perspective becomes even more compelling when considering chaotic classical systems. In such systems, characterized by sensitivity to initial conditions and exponential divergence of trajectories, the Hamilton-Jacobi equation fails to admit a global smooth solution \cite{ArnoldMechanics}. Characteristics (i.e., classical trajectories) intersect, forming caustics, and the action function \(S_J(q;\alpha;t)\) becomes multivalued or singular, see figrue \ref{fig:caustics_chaos}. This breakdown prevents the construction of a global type-2 solution and, a fortiori, any type-1 formulation.

Quantum mechanics does not resolve classical chaos; it erases it. By replacing the nonlinear dynamics of classical trajectories with the linear evolution of the wavefunction \(\psi(q,t)\), quantum mechanics suppresses the sensitive dependence on initial conditions and the complex structure of caustics that define classical chaotic behavior. There is no true quantum chaos \cite{Berry1989}. The behavior of quantum systems with classically chaotic counterparts reveals a critical asymmetry: while classical mechanics exhibits rich chaotic dynamics, quantum mechanics, bounded by unitarity and spectral discreteness, fails to fully reproduce this complexity. This mismatch highlights the representational limitations of quantum mechanics when describing systems governed by structurally unstable classical dynamics. Our work builds on this observation, interpreting quantum mechanics not as a generalization but as a tractable projection of classical variational principles. The loss of fine phase-space resolution and trajectory-based determinism in quantum formulations can thus be seen as a reflection of this constrained, solvable subset perspective. Quantum mechanics thus stands not as a generalization of classical mechanics but as a proper subset: a stationary, rigid projection that discards the dynamical richness of classical action in favor of linear regularization. We link our arguments about chaos in this section to quantum advantage because chaos is fundamentally nonlinear; and what is understood to move beyond classically-simulable quantum gates need to be highly nonlinear \cite{GKP2001}.

In this framework, the wavefunction is not a faithful compression of classical dynamics but a \emph{rigid projection} that eliminates classical failure modes by construction. The emergence of probabilistic behavior and irreducible uncertainty is not evidence of deeper physical indeterminacy but a symptom of forcing chaotic, multivalued classical flows onto a representationally constrained structure. Quantum mechanics does not regularize chaos; it suppresses it by removing the phase-space complexity that would otherwise manifest. This view aligns with semiclassical studies, where chaotic classical systems exhibit rapid phase scrambling and decoherence-like features without any true quantum analogue of classical instability \cite{Berry1977,Littlejohn1986}.

Thus, quantum advantage and the very necessity of quantum mechanics arise not from an expansion of classical theory but from its \emph{representational breakdown}. When the scaffolding equation (i.e. the HJE) becomes unsolvable due to classical intractability, the linear wave description does not capture the chaotic richness, it replaces it with a stationary, probabilistic formalism. Quantum mechanics succeeds precisely where classical mechanics becomes intractable but at the cost of losing access to the full dynamical complexity of classical action.

\section{Empirical Support for Representational Incompleteness}

Evidence supporting the view that quantum mechanics is a computationally constrained projection of a more complete classical structure can also be found in physical systems where classical dynamics unexpectedly persists within a quantum framework. Three such cases are particularly illuminating: recent experiential demonstration of the violation of the Leggett inequality, quantum scars and the dynamical behavior of Hyperion, Saturn’s chaotic moon.

The Leggett inequality was introduced as a refinement of Bell-type no-go theorems, aimed at excluding a broader class of hidden variable theories that are \emph{non-local} yet still preserve a form of \emph{realism} \cite{Leggett2003}. In contrast to Bell's inequality, which tests local realism, Leggett's framework allows for nonlocal correlations but maintains that measurement outcomes reflect pre-existing, detector-independent properties of the system. Specifically, Leggett-type models retain a statistical independence between hidden variables and measurement settings.

Experimental violations of the Leggett inequality \cite{Groblacher2007} have been interpreted as further restricting the space of viable realist theories. Crucially, these violations do not merely imply nonlocality in the Bell sense, they imply a more radical breakdown: the apparent \emph{influence of the detector setting on the state preparation itself}, suggesting that measurement outcomes cannot be cleanly separated from final boundary conditions.

This observation has significant consequences for interpreting the origin of quantum correlations. In the context of our framework, based on Hamilton’s principal function \(S_H(q, t; q_0, t_0)\), the dynamics of a classical system are determined by a variational principle that is inherently \emph{time-symmetric}. The action depends jointly on both initial and final boundary data. Such a formalism naturally accommodates the kind of apparent retrocausality or global constraint propagation required to account for Leggett-type violations, without appealing to stochastic collapse or abandoning determinism.

Thus, objections to the subsetness of quantum mechanics grounded in \emph{Bell’s theorem}, which assert that no local realist model can reproduce quantum statistics, are insufficient. While Bell rules out locality under a realism assumption, Leggett inequality violations go further: they rule out even certain nonlocal realist models that lack retrocausal or endpoint-symmetric structures. What these results suggest is not that realism must be abandoned, but that the classical structure must include globally constrained dynamics: precisely what is encoded in the type-1 Hamilton principal function.

Accordingly, we interpret the violation of Leggett inequalities as empirical support for the idea that the \emph{measurement apparatus (detector) influences the source}, not dynamically via backward causation, but structurally via the time-symmetric variational constraints encoded in classical action. The failure of local independence assumptions in these experiments reflects the failure of assuming forward-time, measurement-independent dynamics, not the failure of realism itself. Quantum mechanics, under this view, emerges as the tractable, probabilistically packaged approximation to a deeper classical theory governed by globally constrained and potentially undecidable action structure.

Quantum scars are anomalous eigenstates that exhibit unexpected localization in quantum systems whose classical counterparts are chaotic. According to the eigenstate thermalization hypothesis (ETH) and random matrix theory (RMT), the eigenstates of quantum systems with chaotic classical analogs should be ergodic and structureless in the semiclassical limit \cite{Deutsch1991,Srednicki1994,Haake2010}. However, scarred states violate this expectation by remaining concentrated along unstable classical periodic orbits, thereby retaining remnants of classical structure within the quantum formalism.

Originally discovered in single-particle systems such as the stadium billiard \cite{Heller1984}, quantum scars have since been identified in many-body systems, most notably in chains of Rydberg atoms \cite{Bernien2017,Turner2018}. These systems display nonergodic dynamics, including anomalously long-lived revivals and slow thermalization, all associated with a subset of eigenstates that are atypical in their localization and overlap with unstable classical trajectories.

In the context of this paper, quantum scars provide empirical support for the view that quantum mechanics does not fully erase classical variational structure, even in regimes dominated by chaos. The persistence of action-phase coherence along unstable orbits suggests that the Hilbert space formalism is not a complete representation of the classical phase space geometry. In particular, it points to the survival of information related to Hamilton's principal function \( S_H(q, t; q_0, t_0) \), which governs global classical evolution under fixed boundary conditions.

This observation is consistent with our claim that quantum mechanics corresponds to a tractable, decidable subset of classical dynamics. In classically chaotic systems, where KAM tori are destroyed and the global action function becomes intractable or even undecidable, the existence of scarred eigenstates reveals that fragments of the classical structure remain embedded in the quantum spectrum. These are not artifacts of incomplete quantization, but rather markers of an underlying classical reality that resists full projection into Hilbert space.

Thus, quantum scars serve as physical evidence that the quantum formalism may fail to encode the full topology and arithmetic complexity of classical phase space when instabilities are present. They support the broader view advanced in this work: that quantum collapse and randomness emerge not from fundamental indeterminism, but from the computational and representational limitations of the quantum framework in describing a deeper classical variational geometry.

The rotational dynamics of Saturn’s moon Hyperion offer yet another compelling test case at the intersection of classical chaos and quantum theory. Hyperion is an irregularly shaped, gravitating body that exhibits chaotic tumbling motion due to tidal torques from Saturn. Its classical motion is highly sensitive to initial conditions, a hallmark of deterministic chaos in Hamiltonian systems \cite{Wisdom1984}. Small variations in orientation or angular momentum lead to exponentially diverging trajectories, with a Lyapunov time on the order of weeks.

Quantum mechanically, Hyperion’s orientation is described by a rotational wavefunction in angular phase space. According to standard quantum theory, such a wavefunction should spread rapidly due to the underlying classical chaos. As Zurek and Paz have shown \cite{Zurek1995}, the rate of quantum wavefunction spreading in chaotic systems scales with the classical Lyapunov exponent, leading to an effective quantum decoherence timescale of days for macroscopic chaotic bodies like Hyperion.

Yet, despite this prediction, astronomical observations confirm that Hyperion maintains a well-defined classical rotational state over extended periods, with no observable signature of wavefunction delocalization or decoherence. This tension is typically resolved within the quantum literature by appealing to environmental decoherence: Hyperion is constantly interacting with photons, dust, and gravitational fields, which serve to rapidly decohere its quantum state and enforce classicality \cite{Tegmark1993}. However, this explanation presumes the universality and completeness of the quantum formalism, while providing no representational account of how classical trajectories are maintained in the face of exponential wavefunction dispersion. In essence, decoherence explains the suppression of interference terms, but does not resolve the deeper problem of wavefunction collapse or the emergence of definite classical facts.

From the perspective advanced in this work, Hyperion's behavior may instead suggest the continued relevance of classical variational structure in chaotic regimes. If the full classical evolution is governed by a global action function \( S_H(q, t; q_0, t_0) \), then the inability of the quantum formalism to represent this structure in the presence of classical chaos indicates a failure not of physical determinism, but of representational completeness. The case of Hyperion thus supports the broader thesis that quantum mechanics is a tractable but incomplete projection of a deeper classical geometry; one in which collapse and classicality arise from the computational or epistemic inaccessibility of the global action function, rather than from stochastic physical processes.

Together, these examples support the idea that quantum mechanics does not always succeed in erasing classical structure, particularly in systems exhibiting chaos, instability, or non-ergodicity. They provide empirical grounding for the proposal that quantum theory, while effective, is a reduced model derived from a more complete classical framework in which Hamilton’s action principles play a fundamental representational role.

\section{Survey of Alternative Continuations from the Hamilton–Jacobi Equation}

The framework developed in this work suggests that quantum mechanics is not a generalization of classical mechanics but a constrained, rigid projection of classical variational structure. It emerges from the Hamilton–Jacobi equation through a historical and mathematically contingent route: the linearization of classical dynamics via the Schrödinger ansatz. This path sacrifices the full richness of classical action, especially in chaotic regimes, in favor of a stationary, probabilistic, and linear formalism.

In this section, we consider whether quantum theory is merely one of multiple possible projections from the HJE, and whether alternative continuations could preserve more of the underlying classical complexity rather than suppress it \cite{NLab}. Several underexplored avenues in mathematical physics, and even in pure math \cite{Jammer1966}, suggest new frameworks that might retain the more intricate aspects of classical dynamics.

\begin{figure}[t]
  \centering
  \includegraphics[width=\linewidth]{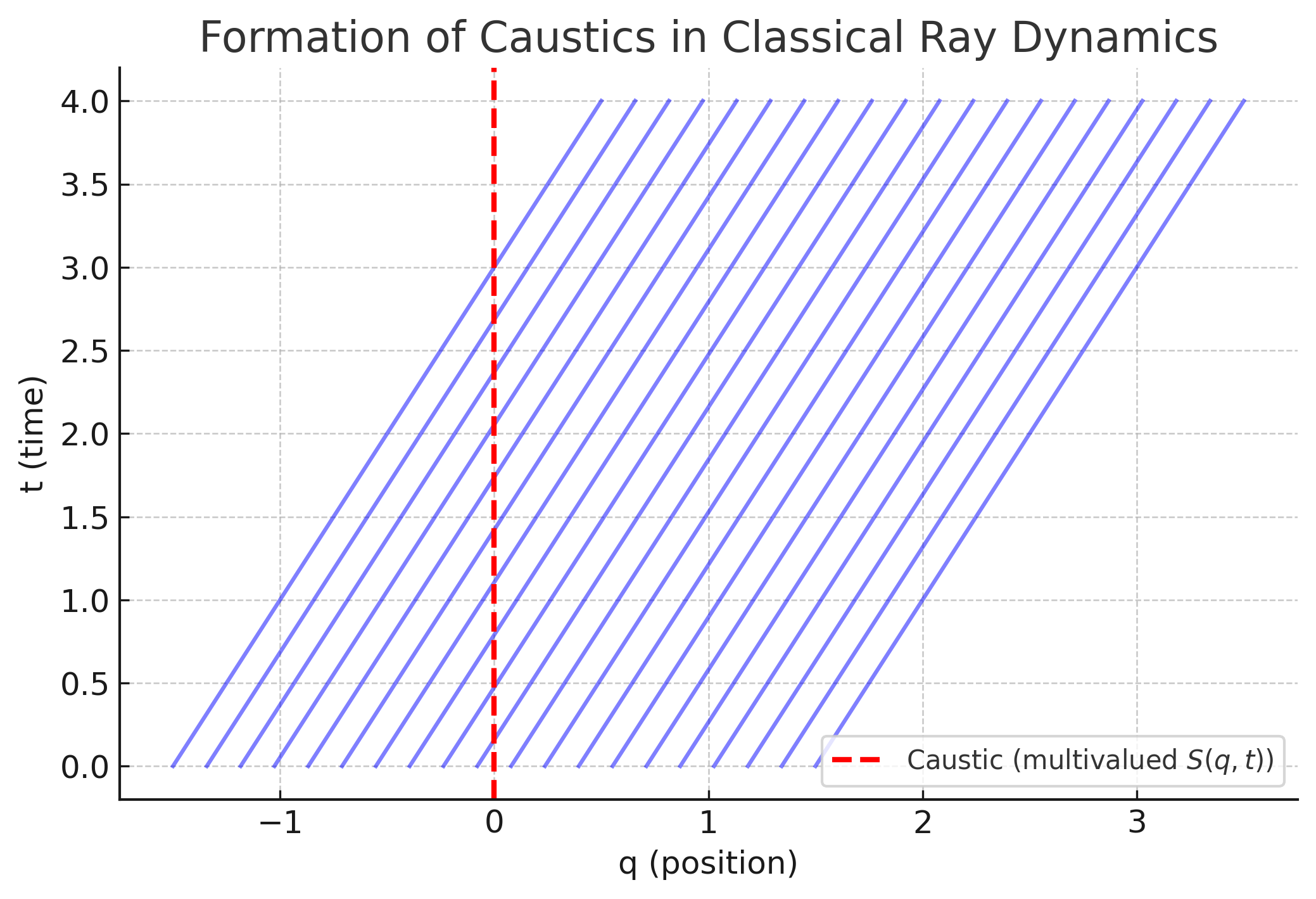}
  \caption{Formation of caustics from intersecting classical trajectories. In the chaotic regime, characteristics cross, causing \( S(q,t) \) to become multivalued. Quantum wavefunctions regularize this behavior by encoding phase rather than action directly.}
  \label{fig:caustics_chaos}
\end{figure}

In the standard construction of quantum mechanics, the Hamilton–Jacobi equation is linearized into the Schrödinger equation:
\begin{equation}
    i\hbar \frac{\partial \psi}{\partial t} = \hat{H}\psi,
\end{equation}
where \(\psi(q,t)\) evolves linearly under the Hamiltonian operator \(\hat{H}\). This linearity is essential: it guarantees the superposition principle, unitary evolution, and the probabilistic interpretation via squared amplitudes. This linearization, however, reflects a representational collapse: an enforced regularization of the multivalued, and often chaotic structure of the original classical variational dynamics. It is natural to ask whether alternative continuations of the Hamilton–Jacobi framework might retain nonlinearity, preserving more of the underlying classical structure rather than projecting it onto a stationary, linear space.

One could consider generalizations where the wavefunction evolves according to a \emph{nonlinear} equation, such as:
\begin{equation}
    i\hbar \frac{\partial \psi}{\partial t} = -\frac{\hbar^2}{2m} \nabla^2 \psi + V(q)\psi + g|\psi|^{2k}\psi,
\end{equation}
where \(g\) and \(k\) are constants controlling the nonlinearity. Such nonlinear Schrödinger-type equations already appear in specific physical contexts: Gross–Pitaevskii equation in Bose–Einstein condensate theory, modeling interacting particle systems with weak nonlinearity (\(k=1\)); nonlinear optics, where the intensity-dependent refractive index induces nonlinear Schrödinger equations for the electromagnetic field envelope; soliton theory, where nonlinearities precisely balance dispersion to produce stable, localized wave packets. Nonlinearity restores certain features lost in pure linear quantum evolution: wave self-interaction, stability of coherent structures (solitons), and modified dispersion and phase evolution.

Pursuing nonlinear wave dynamics as a continuation of classical variational collapse suggests that possible "post-quantum" theories could be built; ones that retain fragments of classical multivaluedness, chaotic sensitivity, and complex phase topology that standard quantum mechanics necessarily suppresses. Nonlinear wave dynamics represents a fundamentally different pathway: not enforcing stationary action through rigid linear projection, but allowing for a controlled, structured reintroduction of complexity into the quantum-classical correspondence.

A central structural constraint of quantum mechanics is its linearity, from which unitarity and the no-cloning theorem both follow. The evolution of quantum states via the Schrödinger equation is linear, and this linearity guarantees that time evolution operators are unitary, preserving inner products and the probabilistic interpretation of the theory. Critically, the no-cloning theorem, which asserts the impossibility of duplicating arbitrary quantum states, relies fundamentally on this linear and unitary structure \cite{Wootters1982}. If quantum mechanics is considered as a linearization of classical theory then the no-cloning prohibition is not a fundamental feature of reality but an artifact of this approximation. In a nonlinear extension of the theory, such as one based on the fully nonlinear variational structure of classical mechanics: unitarity need not hold, and the logic underlying the no-cloning theorem may no longer apply.

In classical mechanics, the action \(S_J(q;\alpha;t)\) plays a central role in characterizing the evolution of systems via the Hamilton–Jacobi equation. In regular regimes, \(S_J(q;\alpha;t)\) is a smooth, single-valued function: at any given point in configuration space \(q\) and time \(t\), there is a unique value of the action, corresponding to a unique classical trajectory. In complex systems, particularly those exhibiting chaotic behavior or intricate phase-space structures, the action becomes inherently multivalued; multiple classical trajectories can arrive at the same point \(q\) at the same time \(t\), each carrying a different classical action value. Thus, \(S_J(q;\alpha;t)\) is no longer a globally single-valued function but becomes a multivalued structure layered over configuration space.

This multivaluedness gives rise to caustics. A caustic is a singular locus in configuration space where multiple branches of the action coalesce, leading to an infinite or undefined gradient of the action. Mathematically, caustics correspond to points where the Jacobian of the transformation from phase-space variables \((q,p)\) to \((q,t)\) vanishes:
\begin{equation}
\det\left(\frac{\partial p}{\partial q}\right) = 0.
\end{equation}

Physically, caustics manifest as regions of enhanced density or interference patterns in classical wave analogues, such as light focusing through a curved surface (e.g., rainbows or optical caustics on the bottom of a swimming pool). While multivalued action refers to the general phenomenon where multiple classical paths intersect a single point in space-time, caustic structures are specific singularities where these multivalued branches fold over one another, creating physical and mathematical singularities. In the traditional construction of quantum mechanics, the presence of caustics and multivalued action is regularized through the wavefunction formalism. Instead of tracking individual classical paths, the wavefunction allows for superposition of amplitudes associated with different classical histories:
\begin{equation}
    \psi(q,t) \sim \sum_j A_j(q,t) \exp\left(\frac{i}{\hbar} S_j(q,t)\right),
\end{equation}
where each \(S_j\) represents a different classical branch of the action.

In this sense, quantum mechanics does not resolve the underlying multivaluedness; it suppresses it into a rigid linear structure, absorbing the complexity into phase interference patterns and probabilistic interpretation. The individual identities of classical trajectories, and their intricate multivalued action structure, are lost in the statistical smoothing of quantum mechanics. If one instead sought to preserve multivalued action and caustic structures explicitly, alternative mathematical frameworks would be required: for instance, describing phase space as a branched manifold, using Riemann surfaces, monodromy representations, or sheaf-theoretic techniques to handle multiple layers of action data systematically. Such an approach would represent a continuation of classical mechanics beyond the collapse into stationary linear quantum theory.

An alternative approach to preserving classical richness is to abandon the assumption of global smoothness altogether. Instead of representing the system with a globally defined function, one can model the evolution of physical quantities using the machinery of sheaf theory and cohomological dynamics. Here, physical observables, trajectories, and even states are treated as local sections over patches of phase space or configuration space. The "global" structure of the system is no longer assumed to exist smoothly everywhere, but rather is assembled from consistent local data according to prescribed gluing conditions. In places where singularities, caustics, or multivaluedness occur, the failure to glue sections smoothly is captured by nontrivial cohomology classes. Instead of a single-valued wavefunction, one might represent the system by local action functionals \( S_i(q,t) \) defined over overlapping patches \(U_i\), transition functions or monodromy data encoding how these patches are related, or cohomological invariants measuring global obstructions to smooth evolution.

Mathematically, the dynamics could be described through derived categories, sheaf cohomology, or even stack-theoretic constructions, where the evolution of the system is governed not by a differential equation on fields, but by morphisms between structured sheaf objects over phase space. Concrete examples of such ideas appear in microlocal analysis, where the singular support of distributions captures phase-space propagation of singularities, topological field theories, where fields are not necessarily smooth but patched together from local data, or sheaf-theoretic models of quantum field theory \cite{freed2010}.

A cohomological or sheaf-based formulation offers a fundamentally richer alternative to standard quantum mechanics; it would allow for the explicit representation of multivaluedness, branching structures, and caustic singularities without collapsing them into smooth probabilistic amplitudes. It could naturally accommodate chaotic behavior, topological phase transitions, and other nontrivial global phenomena, and it might explain phenomena like superselection sectors, phase singularities, or even quantum decoherence as manifestations of cohomological obstructions, rather than fundamental probabilistic behavior. Sheaf-based dynamics represents a path toward preserving the full complexity of classical action structures, offering a radically different continuation of classical mechanics compared to the linear stationary-action collapse that underpins standard quantum theory.

Standard quantum mechanics assumes that configuration space and phase space are built upon the real numbers \(\mathbb{R}\), relying on continuous, differentiable structures. However, classical action-based dynamics, particularly in chaotic regimes, often involve highly intricate, discontinuous, or fractal-like behavior that challenges smooth real-number representations. An alternative path emerges by considering \emph{non-Archimedean} structures \cite{Khrennikov1997}, such as the field of \(p\)-adic numbers \(\mathbb{Q}_p\). The \(p\)-adic numbers offer a radically different notion of "closeness" and "continuity," where proximity is determined not by size but by divisibility properties with respect to a prime \(p\). In a \(p\)-adic framework: numbers can be infinitely close if their difference is highly divisible by \(p\), space becomes ultrametric, obeying a stronger version of the triangle inequality, and structures naturally accommodate branching, hierarchies, and discrete scaling symmetries.

Building on this, \emph{p-adic quantization} proposes to formulate quantum theory over \(\mathbb{Q}_p\) instead of \(\mathbb{R}\). In this setting, the path integral or Hamiltonian evolution could be rewritten as:
\begin{equation}
K_p(x'', t; x', 0) = \int \chi_p\left( -\frac{1}{\hbar} S_H[x(t)] \right) \mathcal{D}_p[x(t)],
\end{equation} where \(\chi_p\) is a \(p\)-adic additive character (analogous to \(e^{iS_H/\hbar} \) in real path integrals), and \(\mathcal{D}_p[x(t)]\) denotes integration over \(p\)-adic paths.

Notably, such a theory would not assume smooth differentiable manifolds, but would naturally encode: discrete scale invariance, ultrametric distance hierarchies, and non-smooth, fractal-like phase-space structures. Concrete areas where p-adic structures have already appeared include: \(p\)-adic string theory, where tree-level amplitudes admit natural \(p\)-adic formulations, \(p\)-adic models of quantum gravity, suggesting that spacetime at very small scales may have non-Archimedean structure, and \(p\)-adic diffusion models in disordered systems and complex materials.

Pursuing \(p\)-adic or arithmetic quantization would represent a radically different continuation of classical mechanics: instead of collapsing chaotic or fractal classical action structures into smooth probability amplitudes, one would embrace their discrete, hierarchical nature from the start. Such an arithmetic framework could: capture intrinsic classical multiscale structures naturally, encode branching dynamics and multivalued action surfaces, or model chaotic systems without forcing artificial smoothness or linearity. Thus, p-adic quantization offers another possible future for physical theory: a fundamentally different continuation from classical variational collapse, bypassing the rigid linear Hilbert space structures of conventional quantum mechanics \cite{Schreiber2013}.

Standard quantum mechanics models evolution through unitary flows on a Hilbert space, assuming smooth, global structure in both space and time. However, classical mechanics, especially in regimes involving chaos, fractal structures, and multivalued action, suggests a deeper geometric and algebraic complexity \cite{Volovich1987}. An alternative continuation involves lifting the classical phase space itself into the realm of algebraic geometry, treating points, trajectories, and dynamical flows as structured objects within moduli spaces. In this view, phase-space flows are not simple trajectories but morphisms between algebraic varieties or moduli of coherent sheaves. Instead of representing states as points in a phase space \((q,p)\), one could represent them as objects in a moduli space of stable bundles, divisors, or coherent sheaves. Dynamics become deformations within these moduli spaces, governed by algebro-geometric operations rather than ordinary differential equations. Finally, time evolution could be seen as a sequence of deformations or mutations in these higher structures, with "classical trajectories" emerging only as special cases.

Examples where such ideas already appear include: mirror symmetry and moduli spaces in string theory, where classical geometry and complex structures undergo deep transformations, Fukaya categories and homological mirror symmetry, where classical phase-space structures are categorified into algebraic objects, and Donaldson–Thomas invariants and deformation theory \cite{DonaldsonThomas1998}, where counting stable objects corresponds to deep structural invariants of moduli spaces.

Pursuing algebro-geometric dynamics would represent a continuation of classical mechanics where multivaluedness, branching, chaotic behavior, and topological transitions are not smoothed away, but explicitly incorporated into the fabric of dynamical evolution. Singularities, bifurcations, and phase transitions would naturally correspond to jumps or stratifications within the moduli space \cite{Kontsevich2003}. Algebro-geometric deformations provide another radically different alternative to the linear Hilbert space structure of standard quantum mechanics: one that retains and organizes classical complexity, rather than projecting it into a rigid stationary framework.

This work helps to motivate a natural research program: to systematically classify and explore all consistent mathematical continuations of classical mechanics beyond stationary-action collapse. We term this prospective field \emph{Post-Hamiltonian Representation Theory}. This program would involve identifying all possible mathematical structures (e.g., nonlinear wave equations, branched Riemann structures, sheaf cohomology flows, p-adic dynamics, algebro-geometric deformations) that can consistently extend classical variational principles without enforcing linearity. Also, classifying continuations based on properties such as symplectic structure preservation, multivaluedness handling, phase-topology evolution, and chaotic sensitivity. Finally, developing new physical theories based on these alternative representations, potentially capturing phenomena inaccessible to standard quantum mechanics.

In this vision, quantum mechanics would appear not as a final theory but as the first, historically contingent solution to the problem of classical variational collapse. Other structures might provide alternative or complementary explanations of physical phenomena, new computational paradigms, or even pathways toward reconciling quantum theory with gravity and spacetime discreteness\cite{Lurie2009}. The Post-Hamiltonian Representation Theory aims to open a new frontier in fundamental physics: the systematic exploration of the full landscape of continuations from classical mechanics, unshackled from the rigid assumptions of standard quantum theory.

\section{Discussion}

A central tension in the standard formulation of quantum mechanics lies in the assumption of unitarity, which presupposes a well-defined separable structure (e.g., tensor-product Hilbert spaces), alongside the empirical centrality of entanglement, which violates separability and exhibits global nonlocal correlations. In conventional interpretations, this tension is absorbed post hoc; entanglement is treated as an emergent phenomenon without being built into the foundational axioms. In contrast, the framework developed here resolves this dichotomy naturally: separability and unitarity arise only in the solvable (i.e., decidable) sector of the classical variational theory. Entanglement is not a paradox but a direct manifestation of the underlying non-separability of Hamilton’s action, projected into the linearized quantum approximation. What appears as a conceptual mismatch in quantum mechanics is reinterpreted as an epistemic artifact; a byproduct of representing fundamentally undecidable, globally coupled dynamics within a decidable and separable formalism.

A direct empirical prediction of our framework, where quantum mechanics is viewed as a linearized, decidable subset of an underlying undecidable classical variational structure, is the systematic deviation from standard quantum coherence and thermalization predictions in systems that approach classical chaos thresholds. Lateral double quantum dots (DQDs), precisely controllable semiconductor structures routinely fabricated and studied, offer an ideal platform for this test \cite{Khomitsky2022}.

In lateral DQDs, electron confinement potentials and interdot tunneling rates are finely tunable using electrostatic gates. Adjusting these parameters allows systematic exploration of transitions between integrable dynamics, such as strongly coupled symmetric dots, and chaotic regimes, including asymmetrically coupled dots or driven dots under varying magnetic fields. This tunability enables an explicit empirical test of our theoretical framework. Notably, DQDs have already been used as platforms to investigate signatures of quantum chaology, including studies of level statistics, nonlinear response, and Floquet dynamics in driven configurations \cite{Marcus1997,Anand2024,PhysRevX.7.011034}.

In the proposed experiment, the quantum dot system would be fabricated using standard gated lithographic methods. By systematically adjusting dot potentials and interdot couplings, one can scan from the regular Coulomb blockade regime into classically chaotic analogs. Quantum states should be prepared to align explicitly with classical trajectories near the integrability-chaos boundary. These states can be initialized as coherent superpositions localized in specific dots using precisely calibrated gate voltages and tailored pulse sequences.

Time-domain measurements can then track electron coherence using microwave or radio-frequency reflectometry to detect charge and spin coherence across the double-dot system. The key observables would be relaxation, decoherence, and revival times of quantum states as functions of dot asymmetry, interdot coupling strength, and applied external fields.

Our framework predicts that quantum states prepared along classical trajectories near the chaos threshold will exhibit anomalously prolonged coherence and distinct coherence revivals, effects that standard quantum mechanics, via the eigenstate thermalization hypothesis, would not anticipate. The observation of such anomalies would provide direct empirical support for the hypothesis that quantum mechanics is a computationally restricted image of a richer yet undecidable classical action geometry.

The broader thesis advanced in this work is that quantum mechanics is not a generalization of classical mechanics, but a projection of its representationally complete but computationally inaccessible variational structure. Hamilton’s type-1 principal function, long regarded as overdetermined and mathematically intractable, is shown to encode an action geometry whose formal undecidability parallels the spectral gap problem in quantum many-body physics. This reframing allows foundational puzzles, such as wavefunction collapse, entanglement, spin, and apparent randomness, to be interpreted as structural artifacts. Our analysis of KAM theory shows that classical predictability is also bounded by arithmetic undecidability, challenging the conventional narrative that quantum mechanics replaces classical determinism with stochasticity. Phenomena such as quantum scars, Leggett inequality violations, and the coherence behavior of chaotic systems like Hyperion, all point to persistent classical structure within quantum regimes. These observations reinforce the core argument that quantum theory does not encompass all of classical mechanics, but instead regularizes and linearizes those parts of it that are computationally decidable. 

One common objection to the framework presented here is that it appears incompatible with the widely held view that atoms and photons are intrinsically quantum objects, discrete particles with no classical analog. However, this view relies on an overly rigid dichotomy between classical and quantum descriptions. In classical mechanics, a generic Hamiltonian system is neither integrable nor ergodic, which permits the existence of bounded, nonergodic orbits, structures that can support long lived, localized states without requiring quantization per se. Such structures provide a natural foundation for atomic stability without invoking quantum indeterminacy as a fundamental postulate \cite{MarkusMeyer1974}. The notion of the photon as an irreducible quantum particle has been seriously challenged. It has been emphasizes that photons are not particles in space-time but rather field excitations that become quantized only upon interaction, and that a 'photon-as-a-particle' is even misleading \cite{scully1997}, or even rejecting the particle conception of photons altogether as unnecessary \cite{lamb1995}. 

These perspectives open the door to reinterpreting atomic and photonic phenomena within a broader, dynamically structured classical framework that includes quantized effects as emergent rather than fundamental. In this light, even gauge forces, such as electromagnetism, can be viewed not as fundamental quantum interactions but as geometrical structures introduced to uphold the principle of stationary action across all frames of reference, ensuring that variational symmetry is preserved in relativistic and accelerating coordinate systems.

We propose that the future of fundamental physics may not lie in further quantization, such as quantum gravity, but in developing a more complete understanding of classical action principles and their undecidable limits. The proposed experimental tests in lateral quantum dot systems offer a first empirical step toward this direction, bridging foundational theory and accessible experimental platforms. The program of Post-Hamiltonian Representation Theory introduced here provides a framework for exploring these continuations beyond the quantum collapse.

\section{Conclusion}

Quantum mechanics has long been celebrated for its empirical accuracy and mathematical elegance, yet its foundational tensions, such as wavefunction collapse, the measurement problem, and the classical-quantum boundary, remain unresolved. In this work, we have proposed that these tensions arise not from any incompleteness in classical mechanics, but from a misunderstanding of its representational scope. We use the term ‘proper subset’ in a formal sense: quantum mechanics corresponds to a computationally tractable, linear projection of classical variational mechanics, specifically those parts where the global action function is decidable, smooth, and regular. The classical framework, governed by Hamilton’s principal function, includes regions, such as those involving chaos or undecidable dynamics, that quantum mechanics cannot fully represent.

At the core of this reinterpretation lies the distinction between Hamilton’s principal function and Jacobi’s integrable reduction. The former, governed by a pair of coupled partial differential equations, is shown to be generically undecidable for arbitrary Hamiltonians, mirroring the spectral gap undecidability observed in many-body quantum systems. The latter, while tractable, represents only a restricted slice of the classical phase space, one that corresponds precisely to the domains in which quantum wave mechanics can be faithfully defined.

This computational perspective is reinforced by number-theoretic constraints in KAM theory, where the long-term stability of classical orbits depends on Diophantine conditions that are themselves undecidable. In chaotic systems such as Hyperion, the mismatch between quantum decoherence predictions and observed classical coherence further points to the survival of action-based classical information beyond what is captured in Hilbert space. Quantum scars, too, defy the expectations of ergodicity and thermalization, suggesting that fragments of classical variational structure persist in the quantum regime. Leggett inequality violations, meanwhile, provide experimental evidence that measurement settings influence outcomes in a manner best explained by time-symmetric boundary constraints, as encoded in Hamilton’s action formalism.

Taken together, these insights support the view that quantum mechanics is not a complete theory of nature, but a tractable resolution of classical overdetermination, an epistemic interface designed to operate within the limits of solvability and computability. The true boundary between classical and quantum is not ontological, but logical. Where classical action becomes undecidable, quantum amplitudes emerge as computable substitutes. Collapse is not a physical process, but a structural response to representational failure.

The reliance of quantum cryptographic protocols on the structural features of quantum mechanics further illustrates the consequences of treating the theory as a representationally constrained subset of classical mechanics. Quantum key distribution (QKD) protocols such as BB84 \cite{Bennett1984} are founded on the linearity of quantum evolution, which ensures unitarity and underpins the no-cloning theorem \cite{Wootters1982,Dieks1982}. These features guarantee that quantum information cannot be copied without detection, and that any measurement by an eavesdropper necessarily disturbs the system in a way that can be observed. However, if quantum mechanics arises from a tractable, linearized projection of a deeper classical theory, one that is nonlinear \cite{Terashima2005} and not necessarily unitary, then these foundational guarantees no longer hold. In particular, violations of linearity would permit, in principle, the cloning of arbitrary states and the extraction of information without observable disturbance. This would render current quantum encryption schemes insecure, not due to flaws in their design, but because their security derives from structural constraints that are not fundamental, but emergent. The robustness of quantum cryptography thus serves as a litmus test for the completeness of quantum theory itself.

Our theoretical framework leads naturally to explicit, testable predictions. Specifically, lateral double quantum dots, systems already widely studied and well-controlled experimentally, present an ideal empirical testing ground. By precisely tuning dot potentials and inter-dot couplings to navigate the boundary between classical integrability and chaos, our theory anticipates observable deviations from standard quantum thermalization and coherence predictions, directly linked to the underlying undecidable classical variational structure.

The observation of anomalously prolonged coherence times and robust coherence revivals in such quantum dot configurations would offer clear and compelling evidence supporting the claim that quantum mechanics emerges as a computationally tractable subset of an otherwise classical undecidable dynamics. Such experimental confirmation would not merely support our foundational theoretical claims but could also provide practical routes toward novel quantum technologies that exploit these robust coherence regimes.

By reframing quantum theory as a solvable embedding of a classically undecidable, non-separable action structure, we recover unitarity and separability as emergent features rather than postulates. This unifies the explanatory treatment of unitarity, entanglement, and measurement without invoking metaphysical paradoxes, and invites a re-examination of quantum foundations from the standpoint of classical computational intractability.

Ultimately, testing these predictions through lateral double quantum dots experiments will bridge foundational theory and practical experimentation, offering a unique opportunity to directly probe the deep logical and structural boundaries that separate classical and quantum descriptions of nature.

The future of foundational physics may not lie in quantizing classical systems, but in extending the classical variational framework itself, through multivalued action functions, cohomological methods, arithmetic deformations, or nonlinear wave structures. We propose a broader program in \textit{Post-Hamiltonian Representation Theory}, aimed at systematically exploring these continuations. In this view, quantum theory is not the end of the classical narrative, but a computationally filtered interlude in a deeper geometric story still unfolding. 

\appendix*
\section{Formal Reduction from the Halting Problem to the Solvability of Hamilton's Principal Function}

We provide a more rigorous version of our claim that the global solvability of Hamilton's principal function \( S_H(q, t; q_0, t_0) \) is undecidable. This reduction is inspired by work in computable analysis and dynamical systems \cite{moore1990,pourel1989,richardson1968}. We define our terms precisely and situate the argument within the framework of classical solution theory for partial differential equations.

\subsection*{1. Formal Problem Statement: \(\Pi_{SH}\)-Global-Solvability}

Let \( H(q, p, t) \in C^\infty(\mathbb{R}^{2n} \times \mathbb{R}) \) be a smooth Hamiltonian function.

Define the boundary value PDE system for \( S_H: \mathbb{R}^{2n+2} \to \mathbb{R} \) as:
\[
\frac{\partial S_H}{\partial t_0} = -H(q_0, \nabla_{q_0} S_H, t_0), \quad 
\frac{\partial S_H}{\partial t} = +H(q, \nabla_q S_H, t)
\]

We ask: does there exist a solution \( S_H \in C^2(\mathbb{R}^{2n+2}) \) satisfying this system for arbitrary \( (q, t; q_0, t_0) \)? We denote this decision problem as \( \Pi_{SH} \).

\subsection*{2. Reduction from the Halting Problem}

Let \( M \) be a Turing machine and \( x \in \{0,1\}^* \) its input. Define \( \phi_M \): ``Does \( M \) halt on input \( x \)?'' This is undecidable.

We will construct a smooth Hamiltonian \( H_M(q, p, t) \) such that:
\begin{itemize}
  \item If \( M(x) \) halts, then the associated \( S_H \) exists globally and smoothly.
  \item If \( M(x) \) does not halt, then \( S_H \) is not globally twice differentiable, due to the formation of caustics or discontinuities in \( \nabla_q S_H \).
\end{itemize}

\subsection*{3. Construction of the Hamiltonian \( H_M \)}

Following \cite{moore1990}, we construct \( H_M \in C^\infty \) that simulates a Turing machine \( M \) as a dynamical system. The phase space encodes:
\begin{itemize}
  \item Tape contents as \( q_1, \dots, q_k \)
  \item Head position as \( q_h \)
  \item Internal state via smooth encodings (e.g., bump functions or submanifold labels)
\end{itemize}

The dynamics are encoded such that:
\begin{itemize}
  \item Each computation step corresponds to a Hamiltonian update
  \item Halting corresponds to convergence to a fixed point in phase space
  \item Non-halting results in an unbounded or recurrent flow in configuration space
\end{itemize}

The explicit construction (omitted for brevity) can be found in \cite{moore1990}, which proves the existence of such \( H_M \in C^\infty \).

\subsection*{4. Implications for the Solvability of \( S_H \)}

We interpret \( S_H(q, t; q_0, t_0) \) as the classical action integral for trajectories governed by \( H_M \). Then:
\begin{itemize}
  \item If \( M \) halts, the associated trajectory remains regular and finite. Hence, \( S_H \in C^2 \) exists globally.
  \item If \( M \) does not halt, the induced flow leads to trajectory intersections (caustics), where \( \nabla_q S_H \) is no longer continuous.
\end{itemize}

This is consistent with the failure of classical solution structure in regions where Hamilton-Jacobi characteristics cross, as formalized in microlocal analysis \cite{ArnoldMechanics}.

\subsection*{5. Formal Statement of Reduction}

Define:
\[
\phi_M \iff \text{``M halts on input } x\text{''}
\]
\begin{align*}
\Pi_{SH}(H_M) \iff\ &\text{``There exists } S_H \in C^2\text{ such that} \\
&\text{it solves the PDEs from } H_M\text{''}
\end{align*}

Then:
\[
\phi_M \iff \Pi_{SH}(H_M).
\]

This establishes a many-one reduction from the Halting Problem to \( \Pi_{SH} \). Therefore, since \( \phi_M \) is undecidable, \( \Pi_{SH} \) is undecidable.\qed

\subsection*{6. Conclusion}

Determining the global \( C^2 \) solvability of \( S_H \) for arbitrary smooth Hamiltonians is undecidable. This connects computability theory with the variational structure of classical mechanics and reinforces the paper’s central claim that quantum mechanics emerges as a tractable projection of a richer, computationally inaccessible classical framework.


\end{document}